\def \a{\bm{a}}
\DeclarePairedDelimiter\floor{\lfloor}{\rfloor}
\begin{document}

\title{Tight Lower Bound on the Tensor Rank based on the Maximally Square Unfolding}

\author{Giuseppe G. Calvi, \textit{Student member, IEEE}, Bruno Scalzo Dees,  \textit{Student member, IEEE},  \\Danilo P. Mandic, \textit{Fellow, IEEE}}


\maketitle

\begin{abstract}
Tensors decompositions are a class of tools for analysing datasets of high dimensionality and variety in a natural manner, with the Canonical Polyadic Decomposition (CPD) being a main pillar. While the notion of CPD is closely intertwined with that of the tensor rank, $R$, unlike the matrix rank, the computation of the tensor rank is an NP-hard problem, owing to the associated computational burden of evaluating the CPD. To address this issue, we investigate tight lower bounds on $R$ with the aim to provide a reduced search space, and hence to lessen the computational costs of the CPD evaluation. This is achieved by establishing a link between the maximum attainable lower bound on $R$ and the dimensions of the matrix unfolding of the tensor with aspect ratio closest to unity (maximally square). Moreover, we demonstrate that, for a generic tensor, such lower bound can be attained under very mild conditions, whereby the tensor rank becomes detectable. Numerical examples demonstrate the benefits of this result.

\end{abstract}

\begin{IEEEkeywords} 
Canonical Polyadic Decomposition (CPD), Tensor Rank, Tight Lower Bound, Maximally Square Unfolding
\end{IEEEkeywords}

\IEEEpeerreviewmaketitle

\section{Introduction}

The increasing prominence of multisensor technology and the associated generated data quantities of exceedingly high dimensionality have highlighted the limitations of standard ``flat-view" matrix and vector models, both in terms of their accuracy and computational requirements \cite{Kolda2009, Mandic2015}. This has, in turn, motivated the development of new, sophisticated tools capable of coping with the sheer complexity of modern datasets. Tensors are multidimensional generalizations of matrices and vectors, which benefit from the power of the underpinnning multilinear algebra to flexibly and efficiently account for multi-way relationships among data. Owing to their ability to exploit the underlying latent data structures, tensors have both pushed theoretical performance limits \cite{Kolda2001, DeLathauwer2000_2, DeLathauwer2000} and opened new avenues for practical applications \cite{Mandic2016_2, Mandic2017, Comon2014, Sidiropoulos2017}, ranging from scientific computing and physics through to signal processing and machine learning.

However, the typically high dimensionality of tensors implies that the application of common algorithms on data in the raw tensor format may become intractable due to the \textit{curse of dimensionality} \cite{Mandic2015}. To tackle this issue, in 1927 Hitchcock  was the first to propose the concept tensor decompositions, and in particular, that of the polyadic expansion of a tensor \cite{Hitchcock1927} (i.e. a sum of rank-one terms), which yields a low-rank approximation of the original tensor through factors of lower complexity. This methodology became popular in the 1970s after its adoption by the psychometrics community, under the name of CANonical DECOMPosition (CANDECOMP) \cite{Carroll1970} or PARAllel FACtors (PARAFAC) \cite{Harshman1970}. Its most common name at present is  Canonical Polyadic Decomposition (CPD), which has maintained its role as one of the workhorses of tensor decompositions  \cite{Tucker1963, Tucker1964, Oseledets2011}. Indeed, the CPD has been widely used as an advanced tool for signal separation within the signal processing and data analytics communities, such as in audio and speech processing, biomedical engineering, machine learning, and chemometrics \cite{Acar2009, Smilde2004}. Further applications include  wireless communications where, owing to a rank-one structure of the real or complex exponentials involved, making use of the CPD natural \cite{Nikias1993,  Sorensen2013}.

Despite its virtues, the CPD faces a fundamental problem related to critically relying on a given value attributed to the tensor rank, $R$. Therefore, the first issue which arises when computing a CPD is the choice of $R$,  however, determining the rank of a tensor is an NP-hard problem \cite{Haastad1990}. Although the computation can be performed successively for increasing values of $R$, until the model fits the underlying tensor arbitrarily well, it is a well-known fact that the best low-rank tensor approximation may not always exist, implying that a value for $R$ has to be arbitrarily chosen \cite{Kolda2009}. Because the CPD is intrinsically multilinear, the most common approach for the computation of the CPD is through an Alternating Least Squares (ALS) procedure, whereby the CP factors are estimated one at a time, while the others are kept fixed. The presence of local minima in this approach, combined with the fact that an acceptable value for $R$ has to be sought iteratively, makes CPD a rather computationally expensive procedure. 

This motivates us to propose ways to relax the computational burden associated with the CPD, by studying the properties of tight lower bounds on the tensor rank through matrix unfoldings of the associated tensor. Knowledge of lower bounds on $R$ would, in turn, lower the computational cost by reducing the search space for $R$. In this work, we show that the maximum of such bounds is equal to the maximum matrix rank attainable by any unfolding of the tensor. To show this, we start from the definition of a ``flattened CPD", to allow for a manipulation of the CP factors using standard linear algebra; this serves as a basis to develop the theory behind the proposed tight lower bound of the CPD rank which relies on a specific permutation of the original tensor. From Sylvester's rank inequality, we subsequently demonstrate that, for generic tensors, the bound is attained under very mild conditions. This, in turn, allows for the tensor rank to be detected by simply inspecting the matrix rank of the associated tensor unfolding. We illustrate the implications of this result through numerical examples, which highlight its practical benefits.

\pagebreak

\section{Notation and Background}

We adopt most of our notation from \cite{Kolda2009}.

\begin{table}[H]
	\centering
	\caption{Main tensor nomenclature.}
	\label{table:nomenclature}
	
	\begin{tabular}{ll}
		\hline
		
		\vspace{-2mm}	& \\
		
		$\ten{X} \in \mathbb{R}^{I_1 \times I_2 \times \cdots \times I_N}$ & \begin{tabular}[c]{@{}l@{}} Tensor of order $N$ of\\ size $I_1 \times I_2 \times \cdots \times I_N$\end{tabular} \vspace{2mm} \\

		$x_{i_1i_2\cdots i_N}=\ten{X}(i_1,i_2,\dots,i_N)$ 	& $(i_1,i_2,\dots,i_N)$ entry of $\ten{X}$ \vspace{2mm} \\

		$x$, $\mathbf{x}$, $\mathbf{X} $	& Scalar, vector, matrix  \vspace{2mm}\\

		$\mathbf{A}^{(n)}$ & \begin{tabular}[c]{@{}l@{}}Factor matrices \\  \end{tabular} \vspace{2mm} \\
		
		$(\cdot)^T$ & \begin{tabular}[c]{@{}l@{}} Transpose  operator \\  for matrices\end{tabular} \vspace{2mm} \\

		
		$\circ,  \odot, \otimes$	&  \begin{tabular}[c]{@{}l@{}} Outer, Khatri-Rao, and \\ Kronecker products\end{tabular} \vspace{2mm}  \\ 
		
		$||\cdot||_F$	& Frobenius norm  \vspace{2mm}  \\

		\hline
	\end{tabular}
\end{table}

A Khatri-Rao product of matrices $\mathbf{A} = [\mathbf{a}_1, \dots, \mathbf{a}_R] \in \mathbb{R}^{I \times R}$ and $\mathbf{B} = [\mathbf{b}_1, \dots, \mathbf{b}_R] \in \mathbb{R}^{J \times R}$ yields $\mathbf{C} \in \mathbb{R}^{IJ \times R}$ with columns $\mathbf{c}_i = \mathbf{a}_i\otimes \mathbf{b}_i$, $i=1, \dots, R$, where the symbol $\otimes$ denotes the Kronecker product. Furthermore, the outer product of an $N$-th order tensor $\ten{A} \in \mathbb{R}^{I_1 \times I_2 \times \cdots \times I_N}$ and an $M$-th order tensor $\ten{B} \in \mathbb{R}^{J_1 \times J_2 \times \cdots \times J_M}$ yields an $(N+M)$-th order tensor $\ten{C} \in \mathbb{R}^{I_1 \times \cdots \times I_N \times J_1 \times \cdots \times J_M}$ with entries $c_{i_1\cdots i_N j_1 \cdots j_M} = a_{i_1 \cdots i_N} b_{j_1 \cdots j_M}$.

\begin{definition}
	An $N$-th order tensor is said to be of \textbf{rank-1} if it can be written as an outer product of $N$ vectors, that is
	\begin{equation}
	\ten{X} = \a^{(1)} \circ \a^{(2)} \circ \dots \circ \a^{(N)}
	\end{equation}
	In other words, each element of such tensor is a product of the corresponding vector elements
	\begin{equation}
	x_{i_1i_2\cdots i_N} = a_{i_1}^{(1)} a_{i_2}^{(2)}\cdots  a_{i_N}^{(N)}, \text{  for all  }  1\leq i_n \leq I_n
	\end{equation}
\end{definition}

The Canonical Polyadic Decomposition (CPD) decomposes any tensor into a sum of rank-1 tensors. For instance, the CPD of a $3$-rd order tensor, $\ten{X} \in \mathbb{R}^{I_1 \times I_2 \times I_3}$, yields
\begin{equation}\label{eq:cpd3}
\ten{X} \approx \sum_{r=1}^{R} \lambda_r \mathbf{a}^{(1)}_r \circ \mathbf{a}^{(2)}_r \circ \mathbf{a}^{(3)}_r
\end{equation}
where $\mathbf{a}^{(1)} \in \mathbb{R}^{I_1}$, $\mathbf{a}^{(2)} \in \mathbb{R}^{I_2}$, $\mathbf{a}^{(3)} \in \mathbb{R}^{I_3}$. If the approximation in (\ref{eq:cpd3}) is exact, then $R$ is said to be the rank of the tensor.

\begin{definition}
	The \textbf{rank} of a tensor $\ten{X}$, $\rank(\ten{X})=R$, is the smallest number of rank-1 tensor factors in (\ref{eq:cpd3})  which ensures that their sum  generates $\ten{X}$ exactly.
\end{definition}

The elementwise form of (\ref{eq:cpd3}) can be written as
\begin{equation}
\ten{X}(i_1, i_2, i_3) = \sum_{r=1}^{R} \lambda_r a^{(1)}_{i_1 r} a^{(2)}_{i_2 r} a^{(3)}_{i_3 r}
\end{equation}
If the component vectors are combined into matrices, e.g. $\mathbf{A}^{(1)} = [\mathbf{a}^{(1)}_1, \mathbf{a}^{(1)}_2, \dots, \mathbf{a}^{(1)}_R]$, the CPD can be expressed through multi-linear products (see \cite{Kolda2009} for further detail) as
\begin{equation}
\begin{aligned}
\ten{X} & \approx \ten{D} \times_1 \mathbf{A}^{(1)} \times_2 \mathbf{A}^{(2)} \times_3 \mathbf{A}^{(3)}\\
& \equiv \llbracket \ten{D}; \mathbf{A}^{(1)}, \mathbf{A}^{(2)}, \mathbf{A}^{(3)} \rrbracket
\end{aligned}
\end{equation}
where $\mathbf{A}^{(1)} \in \mathbb{R}^{I_1 \times R}$, $\mathbf{A}^{(2)} \in \mathbb{R}^{I_2 \times R}$, $\mathbf{A}^{(3)} \in \mathbb{R}^{I_3 \times R}$, and $\ten{D} \in \mathbb{R}^{R\times R\times R}$ is a diagonal tensor containing $\lambda_r$'s on its diagonal, that is 
\begin{equation}\label{eq:core}
\ten{D}(r_1, r_2, r_3) = 
\begin{cases}
\lambda_r, \text{ if } r_1=r_2=r_3=r \\
0, \text{ otherwise}
\end{cases}
\end{equation}
The CPD for a $3$-rd order tensor is illustrated in Figure \ref{fig:cpd3}.

\begin{figure}[t]
	\centering
	\includegraphics[width=0.9\linewidth,  trim={0cm 0cm 0cm 0cm}, clip]{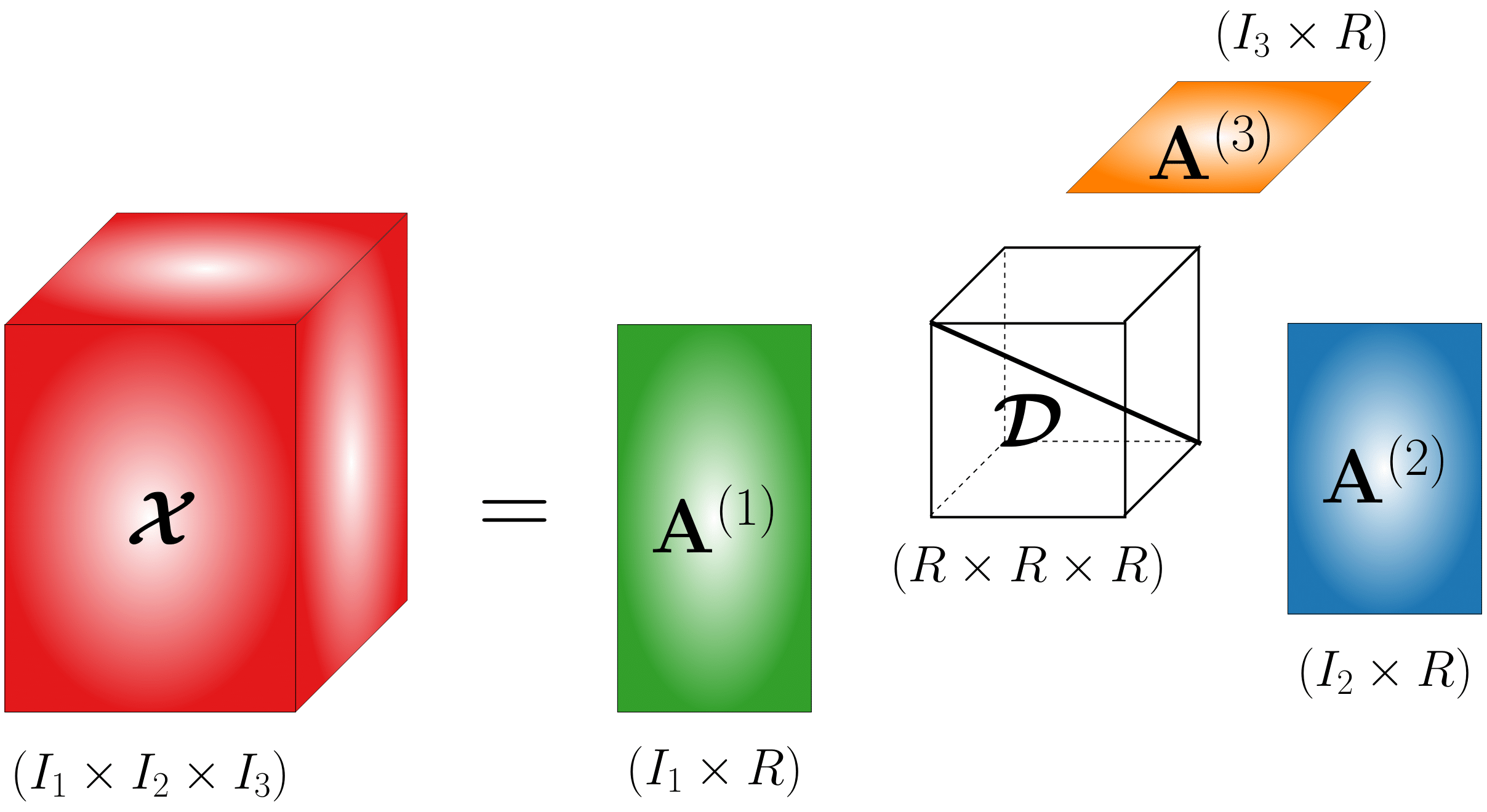}
	\caption{{Illustration of the CPD for a $3$-rd order tensor $\ten{X}\in\mathbb{R}^{I_1\times I_2 \times I_3}$.}}
	\label{fig:cpd3}
\end{figure}

The CPD for an $N$-th order tensor, $\ten{X} \in \mathbb{R}^{I_1 \times I_2 \times \cdots \times I_N}$,  takes a general form 
\begin{equation}\label{eq:cpd}
\begin{aligned}
\ten{X} & \approx \sum_{r=1}^{R} \lambda_r \mathbf{a}_r^{(1)} \circ \mathbf{a}_r^{(2)} \circ \cdots \circ \mathbf{a}_r^{(N)}\\
&= \ten{D} \times_1 \mathbf{A}^{(1)} \times_2 \mathbf{A}^{(2)} \times \cdots \times_N \mathbf{A}^{(N)}\\
&\equiv \llbracket  \ten{D}; \mathbf{A}^{(1)}, \mathbf{A}^{(2)}, \dots, \mathbf{A}^{(N)} \rrbracket
\end{aligned}
\end{equation}

\begin{remark}
	If the approximation in (\ref{eq:cpd}) is exact, in this work, we say that the underlying tensor, $\ten{X}\in \mathbb{R}^{I_1 \times I_2 \times \cdots \times I_N}$ \textit{admits a CPD}.
\end{remark}

\section{Tight Lower Bound Derivation}

To arrive at a tight lower bound on tensor rank, $R$, we start from the following definitions.

\begin{definition}\label{def:n-unfolding}
	The \textbf{$n$-unfolding} of a tensor, $\ten{X} \in \mathbb{R}^{I_1 \times I_2 \times \cdots \times I_N}$, is denoted by $\mathcal{X}_{< n >}$, and represents a matrix with entries
	\begin{equation}
	(\mathbf{X}_{ < n > })_{\overline{i_1 \dots i_n},\overline{i_{n+1} \dots i_N   }} = x_{i_1 i_2 \dots i_N}
	\end{equation}
\end{definition}

\begin{definition}
	The \textbf{$n$-rank} of a tensor, $\ten{X} \in \mathbb{R}^{I_1 \times I_2 \times \cdots \times I_N}$, is $\textnormal{\rank}(\mathbf{X}_{<n>})$, i.e. the matrix rank of its $n$-unfolding. 
\end{definition}


Note that if an $N$-th order tensor admits a CPD, then its $n$-mode unfolding is 
\begin{equation}\label{eq:cpd_unfold_2}
	\begin{aligned}
		\mathbf{X}_{<n>} &= (\mathbf{A}^{(n)} \odot \cdots \odot \mathbf{A}^{(1)} ) \mathbf{D} (\mathbf{A}^{(N)}  \odot \cdots \odot \mathbf{A}^{(n+1)}  )^T \\
		& =  \Big(\Modot_{i=n}^{^1} \mathbf{A}^{(i)}\Big) \mathbf{D} \Big(\Modot_{i=N}^{^{n+1}} \mathbf{A}^{(i)T}\Big)
	\end{aligned}
\end{equation}
Using the fact that, for matrices, $\rank(\mathbf{ABC}) \leq \min \{ \rank(\mathbf{A}), \rank(\mathbf{B}), \rank(\mathbf{C})\}$, and that $\mathbf{D} \in \mathbb{R}^{R\times R}$ is diagonal, it then follows that
\begin{equation}
	\begin{aligned}
		&\rank (\mathbf{X}_{<n>})  \leq\\
		& \min \Big\{\rank \Big(  \Modot_{i=n}^{^1} \mathbf{A}^{(i)T} \Big), \rank \Big(  \Modot_{i=N}^{^{n+1}} \mathbf{A}^{(i)T} \Big),    \rank(\mathbf{D})   \Big\} \\
		& =  \min \Big\{\rank \Big(  \Modot_{i=n}^{^1} \mathbf{A}^{(i)T} \Big), \rank \Big(  \Modot_{i=N}^{^{n+1}} \mathbf{A}^{(i)T} \Big),  R  \Big\} \\
		& \leq R
	\end{aligned}
\end{equation}
Because $\rank(\mathbf{X}_{<n>}) \leq R$ for all $n=1, \dots, N$, this implies that the rank of the tensor, $R$, is bounded from below by its maximum $n$-rank, that is
\begin{equation}\label{eq:lowerbound1}
	\max_{n=1, \dots, N}\{\rank(\mathbf{X}_{ < n >})\} \leq R
\end{equation}

\noindent Let $k$ denote the argument index of the $n$-unfolding of tensor $\ten{X} \in \mathbb{R}^{I_1 \times I_2 \times \cdots \times I_N}$ with maximum $n$-rank, that is
\begin{equation}
	k= \argmaxA_{n=1, \dots, N}\{  \rank(\mathbf{X}_{<n>}) \}
\end{equation} 
and consider now the $k$-unfolding of $\ten{X}$, 
\begin{equation}\label{eq:kron}
	\mathbf{X}_{<k>} = \underbrace{(\mathbf{A}^{(k)} \odot \cdots \odot \mathbf{A}^{(1)})}_{\mathbf{\underline{A}}_1 \in \mathbb{R}^{I_1\cdots I_k \times R} }  \mathbf{D} {\underbrace{(\mathbf{A}^{(N)} \odot \cdots \odot \mathbf{A}^{(k+1)})}_{{\mathbf{\underline{A}}_2 \in \mathbb{R}^{I_{k+1}\cdots I_N \times R} } } }  ^T
\end{equation}
From Sylvester's rank inequality \cite{Matsaglia1974}, for any matrix triplet $\mathbf{A} \in \mathbb{R}^{N\times M}$, $\mathbf{B} \in \mathbb{R}^{M \times P}$, $\mathbf{C} \in \mathbb{R}^{P\times K}$, we have the following 
\begin{equation}
	\rank( \mathbf{ABC} ) \geq \rank(\mathbf{A}) +  \rank(\mathbf{B}) + \rank(\mathbf{C}) - M - P
\end{equation}
It then follows that, from (\ref{eq:kron}), we obtain
\begin{equation}\label{eq:bound1}
	\rank( \mathbf{X}_{<k>}) \geq R_1 + R + R_2 - R - R
\end{equation}
where $R_1 = \rank( \mathbf{\underline{A}}_1)$, $R_2 = \rank( \mathbf{\underline{A}}_2)$. Combining the above result with that in (\ref{eq:lowerbound1}), yields
\begin{equation}\label{eq:squeeze}
	 R_1 + R_2 - R \leq \rank(\mathbf{X}_{<k>}) \leq R
\end{equation}
whereby if the equality $ R_1 + R_2 - R = R$ is verified, then $\rank(\mathbf{X}_{<k>}) = R$. Consequently, for this to hold, we require $R_1 = R_2 = R$. Generically\footnote{We adopt the same terminology as in \cite{DeLathauwer2008_1, DeLathauwer2008_2}. When we say that, generically, $\rank (\mathbf{A} \odot \mathbf{B}) = \min\{ IJ, R  \} $, for $\mathbf{A} \in \mathbb{R}^{I \times R}, \mathbf{B} \in \mathbb{R}^{J\times R}$ we mean that the property should hold with probability one for matrices $\mathbf{A}$, $\mathbf{B}$, drawn from a continuous probability density function over their respective subspaces.}, for matrices $\mathbf{A} \in \mathbb{R}^{I \times R}$, and $\mathbf{B} \in \mathbb{R}^{J \times  R}$, we have the identity $\rank (\mathbf{A} \odot \mathbf{B}) = \min\{ IJ, R  \} $ \cite{DeLathauwer2008_1, DeLathauwer2008_2}. As a result, for generic $ \underline{\mathbf{A}}_1$ and $ \underline{\mathbf{A}}_2$, we have that $\rank( \underline{\mathbf{A}}_1 )  = \min\{\prod_{i=1}^{k} I_i, R\}$ and $\rank( \underline{\mathbf{A}}_2 )  = \min\{\prod_{i=k+1}^{N} I_i, R\}$. This implies, in turn, that $R_1 = R_2 = R$ is equivalent to satisfying $R \leq \prod_{i=1}^{k} I_i$ and $R \leq \prod_{i=k+1}^{N} I_i$ simultaneously, that is
\begin{equation}\label{eq:lowerbound2}
	R \leq \min \bigg\{   \prod_{i=1}^{k} I_i,  \prod_{i=k+1}^{N} I_i   \bigg\}
\end{equation}
In other words, if a tensor, $\ten{X} \in \mathbb{R}^{I_1 \times I_2 \times \cdots \times I_N}$, admits a CPD and if (\ref{eq:lowerbound2}) holds, then $R = \rank(\mathbf{X}_{<k>})$, where $k= \argmaxA_{n}\{  \rank(\mathbf{X}_{<n>}) \}$. Moreover, notice that the strict inequality
\begin{equation}\label{eq:bound}
	\begin{aligned}
		&\rank(\mathbf{X}_{<k>}) < \bigg\{   \prod_{i=1}^{k} I_i,  \prod_{i=k+1}^{N} I_i   \bigg\}
	\end{aligned}
\end{equation}
implies 
\begin{equation}\label{eq:bound2}
\begin{aligned}
R < \min \bigg\{   \prod_{i=1}^{k} I_i,  \prod_{i=k+1}^{N} I_i   \bigg\}
\end{aligned}
\end{equation}
and, as a result, (\ref{eq:lowerbound2}) is satisfied. Thus, if $\mathbf{X}_{<k>}$ is rank-deficient (\ref{eq:bound2}), generically we have that $\rank(\mathbf{X}_{<k>}) = R$. 

In the following, we further show that it possible to find a value $r$ such that $\rank(\mathbf{X}_{<k>}) \leq r = R$. This is done by considering permutations of the indices of the original tensor $\ten{X}$.

\begin{definition}
	For a tensor, $\ten{X} \in \mathbb{R}^{I_1 \times I_2 \times \cdots \times I_N}$, its $N$ indices admit $N!$ permutations, denoted by $\rho=1, \dots, N!$, with the corresponding permuted tensor, $\ten{X}^\rho \in \mathbb{R}^{[I_1 \times I_2 \times \cdots \times I_N]^\rho}$. The permuted tensor has the same elements as $\ten{X}$, but with their indices permuted according to a permutation scheme $\rho$. Therefore, $\ten{X}^\rho$ is referred to as a \textbf{permutation} of $\ten{X}$.
\end{definition}

For illustration, consider a $3$-rd order tensor, $\ten{X} \in \mathbb{R}^{I_1 \times I_2 \times I_3}$. Based on its indices $\{I_1, I_2, I_3 \}$, we can define up to $\rho=1, \dots, 3!=6$ permutations, which include $\ten{X}^1 \in \mathbb{R}^{[I_1 \times I_2 \times I_3]^1} \equiv \ten{X}^1 \in \mathbb{R}^{I_1 \times I_2 \times I_3}$, or $\ten{X}^2 \in \mathbb{R}^{[I_1 \times I_2 \times I_3]^2} \equiv \ten{X}^2 \in \mathbb{R}^{I_2 \times I_1 \times I_3}$. 
 
\begin{definition}\label{def:maximal}
	The argument index of the $n$-unfolding  for a permutation $\rho$ which exhibits a maximum $n$-rank is denoted by $k_\rho$, that is
	\begin{equation}
	k_\rho = \argmaxA_{n=1, \dots, N} \{\textnormal{\rank}(\mathbf{X}^\rho_{<n>})  \}, \text{ } \rho = 1, \dots, N!
	\end{equation}
	By continuity, a \textbf{maximum $n$-rank permutation} (or \textbf{maximal permutation}, for brevity) of tensor $\ten{X} \in \mathbb{R}^{I_1 \times I_2\times \cdots \times I_N}$ is denoted by $\ten{X}^{\rho*}$, where $\rho^*$ satisfies
	\begin{equation}
	\rho^* = \argmaxA_{\rho=1, \dots, N!} k_\rho
	\end{equation}
\end{definition}

\begin{remark}
	A tensor, $\ten{X} \in \mathbb{R}^{I_1 \times I_2 \times \cdots \times I_N}$, may admit multiple maximal permutations.
\end{remark}

\begin{remark}\label{prop:cpdperm}
	If a tensor, $\ten{X} \in \mathbb{R}^{I_1 \times I_2 \times \cdots \times I_N},$ admits a CPD of rank $R$, then any permutation of $\ten{X}$, given by $\ten{X}^\rho \in \mathbb{R}^{[I_1 \times I_2 \times \cdots \times I_N]^\rho}$, admits a CPD of rank $R$ which shares the same factors with the CPD of $\ten{X}$, but rearranged following the permutation scheme of the indices. In other words, if
	\begin{equation}
	\ten{X} = \llbracket \mathbf{D}; \mathbf{A}^{(1)}, \dots, \mathbf{A}^{(N)} \rrbracket
	\end{equation}
	then, 
	\begin{equation}
	\ten{X}^\rho = \llbracket \mathbf{D}; [\mathbf{A}^{(1)}, \dots, \mathbf{A}^{(N)}]^\rho \rrbracket
	\end{equation}
	and $\rank(\ten{X}) = \rank(\ten{X}^\rho) = R$. Note that, defining $\mathcal{I} = \{I_1, I_2, \dots, I_N\}$ and $\mathcal{I}^{\rho*} = \mathcal{J} = \{J_1, J_2, \dots, J_N\}$, the maximal permutation, $\rho^*$, can be interpreted as the bijection $\rho^*: \mathcal{I} \mapsto \mathcal{J}$.
\end{remark}

Equipped with the above, this section can be summarized in the following,
\begin{theorem}\label{th:rank}
		If a tensor, $\ten{X} \in \mathbb{R}^{I_1 \times I_2 \times \cdots \times I_N}$, admits a CPD of rank $R$, its maximal permutation, $\ten{X}^{\rho*} \in \mathbb{R}^{J_1 \times J_2 \times \cdots \times J_N}$, also admits a CPD of rank $R$. Hence, $R$ is bounded from below by the rank of the $k$-unfolding, $\mathbf{X}_{<k>}^{\rho*}$, with $k = \argmaxA_{n} \rank(\mathbf{X}_{<n>}^{\rho*})$, that is
		\begin{equation}
			\rank(\mathbf{X}_{<k>}^{\rho*}) \leq R
		\end{equation}
		Moreover, if $\mathbf{X}_{<k>}$ is rank-deficient, i.e. if 
		\begin{equation}\label{eq:th}
			\rank(\mathbf{X}_{<k>}^{\rho*}) < \min \bigg\{   \prod_{i=1}^{k} J_i,  \prod_{i=k+1}^{N} J_i   \bigg\}
		\end{equation}
		then, generically, $\rank(\mathbf{X}_{<k>}^{\rho*}) = \rank(\ten{X}^{\rho*}) = R$.
\end{theorem}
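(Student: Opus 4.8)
The plan is to assemble the theorem from the three ingredients already established above, applied throughout to the permuted tensor $\ten{X}^{\rho*}$. For the first assertion I would simply invoke Remark \ref{prop:cpdperm}: since $\ten{X}^{\rho*}$ is a permutation of $\ten{X}$, it inherits the same diagonal core $\mathbf{D}$ and the same CP factors, merely reindexed according to $\rho^*$, so it admits a CPD with the same number $R$ of rank-one terms. Hence $\rank(\ten{X}^{\rho*}) = R$, and no new work is required beyond observing that a relabelling of the modes cannot change the minimal number of rank-one summands.

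For the lower bound I would apply the inequality (\ref{eq:lowerbound1}) directly to $\ten{X}^{\rho*}$. Because $\ten{X}^{\rho*}$ admits a CPD of rank $R$, each of its $n$-unfoldings factorizes as in (\ref{eq:cpd_unfold_2}), and sub-multiplicativity of the matrix rank forces $\rank(\mathbf{X}_{<n>}^{\rho*}) \leq R$ for every $n$. Taking the maximum over $n$ and recalling that $k$ is the argmax index yields $\rank(\mathbf{X}_{<k>}^{\rho*}) \leq R$, as claimed.

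For the equality under rank-deficiency I would factor the $k$-unfolding as $\mathbf{X}_{<k>}^{\rho*} = \underline{\mathbf{A}}_1 \mathbf{D} \underline{\mathbf{A}}_2^{T}$ following (\ref{eq:kron}), with the two Khatri-Rao blocks $\underline{\mathbf{A}}_1$ and $\underline{\mathbf{A}}_2$ of sizes $\prod_{i=1}^{k} J_i \times R$ and $\prod_{i=k+1}^{N} J_i \times R$. Sylvester's inequality gives the lower bound (\ref{eq:bound1}), and combined with the upper bound above it produces the squeeze (\ref{eq:squeeze}). The decisive step is to evaluate $R_1 = \rank(\underline{\mathbf{A}}_1)$ and $R_2 = \rank(\underline{\mathbf{A}}_2)$ generically: since a permutation of generic factors is again in generic position, the Khatri-Rao rank identity $\rank(\mathbf{A}\odot\mathbf{B}) = \min\{IJ, R\}$ gives $R_1 = \min\{\prod_{i=1}^{k} J_i, R\}$ and $R_2 = \min\{\prod_{i=k+1}^{N} J_i, R\}$ with probability one, so that generically $\rank(\mathbf{X}_{<k>}^{\rho*}) = \min\{\prod_{i=1}^{k} J_i, \prod_{i=k+1}^{N} J_i, R\}$. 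The hypothesis (\ref{eq:th}) states that this rank is strictly smaller than $\min\{\prod_{i=1}^{k} J_i, \prod_{i=k+1}^{N} J_i\}$; since the three-term minimum equals $\min\{\min\{\prod_{i=1}^{k} J_i, \prod_{i=k+1}^{N} J_i\}, R\}$, it can fall strictly below the two-term minimum only if $R$ is the active term, i.e. $R < \min\{\prod_{i=1}^{k} J_i, \prod_{i=k+1}^{N} J_i\}$, which is exactly (\ref{eq:bound2}) and a fortiori (\ref{eq:lowerbound2}). This forces $\rank(\mathbf{X}_{<k>}^{\rho*}) = R = \rank(\ten{X}^{\rho*})$.

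The main obstacle is the genericity bookkeeping in this last step rather than any individual inequality. Two points deserve care: first, that applying $\rho^*$ does not destroy the generic position of the factor matrices, so that the Khatri-Rao rank identity remains valid for $\underline{\mathbf{A}}_1$ and $\underline{\mathbf{A}}_2$; and second, that the rank-deficiency condition (\ref{eq:th}) is precisely what selects $R$ as the minimizing term of $\min\{\prod_{i=1}^{k} J_i, \prod_{i=k+1}^{N} J_i, R\}$, so that the squeeze (\ref{eq:squeeze}) closes. The role of the maximal permutation is to balance the two Khatri-Rao blocks (maximally square), which maximizes the attainable lower bound and gives the best chance of the rank-deficiency hypothesis being satisfied; this motivates the choice of $\rho^*$ but is not itself part of the logical derivation of the equality.
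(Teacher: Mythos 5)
Your proposal is correct and follows essentially the same route as the paper: the paper's proof is a one-line assembly of Remark \ref{prop:cpdperm}, the bound (\ref{eq:lowerbound1}) applied to $\ten{X}^{\rho*}$, and the Sylvester squeeze (\ref{eq:squeeze})--(\ref{eq:bound}) with the generic Khatri--Rao rank identity. Your version simply spells out these same steps in more detail, in particular making explicit why the rank-deficiency hypothesis forces $R$ to be the active term in the minimum, which the paper leaves implicit.
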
 
\begin{proof}
	By (\ref{eq:lowerbound1}) and Remark \ref{prop:cpdperm}, we obtain $\rank(\mathbf{X}_{<k>}^{\rho*}) \leq R$, while the rest of Theorem \ref{th:rank} follows from (\ref{eq:squeeze})--(\ref{eq:bound}).
\end{proof}

\section{Rank Detectability}
Theorem \ref{th:rank} implies that it is generically possible to \textit{detect} the rank of a tensor $\ten{X} \in \mathbb{R}^{I_1 \times I_2 \times \cdots \times I_N}$ simply by computing the maximum $n$-rank of its maximal permutation, provided that (\ref{eq:th}) holds, in which case, the rank of the tensor is automatically given as the rank of the unfolding. The problem then boils down to finding the maximal permutation $\rho^*$, which is generically given as the solution to,
\begin{equation}\label{eq:problem}
    \argmaxA_{\mathcal{S}_1, \mathcal{S}_2} \min \bigg\{    \prod_{i \in \mathcal{S}_1} I_i, \prod_{i \in \mathcal{S}_2} I_i  \bigg\}
\end{equation}
where $\mathcal{S}_1 \cup \mathcal{S}_2 = \{I_1, \dots, I_N \}$, $\mathcal{S}_1 \cap \mathcal{S}_2 = \varnothing$, $|\mathcal{S}_1| = N_1$, $|\mathcal{S}_2| = N_2$, and $N_1+N_2 = N$. Intuitively, the problem in (\ref{eq:problem}) is equivalent to that of finding the matrix unfolding of the original tensor $\ten{X} \in \mathbb{R}^{I_1 \times I_2 \times \cdots \times I_N}$ which is ``as square as possible", referred to as \textbf{maximally square unfolding}.

The problem in (\ref{eq:problem}) can hence be expressed as determining the unfolding with aspect ratio closest to unity, that is
\begin{equation}
	\argminA_{\mathcal{S}_1, \mathcal{S}_2} \bigg| \frac{  \prod_{i \in \mathcal{S}_1}I_i  }{\prod_{i \in \mathcal{S}_2}I_i} -1 \bigg|
\end{equation}
which is equivalent to
\begin{equation}\label{eq:step}
	\begin{aligned}
		&\argminA_{\mathcal{S}_1, \mathcal{S}_2}  \bigg| \log \bigg( \frac{  \prod_{i \in \mathcal{S}_1}I_i  }{\prod_{i \in \mathcal{S}_2}I_i} \bigg) \bigg|\\
		=&\argminA_{\mathcal{S}_1, \mathcal{S}_2} \bigg|   \log \Big(\prod_{i \in \mathcal{S}_1}I_i\Big) -  \log \Big(\prod_{i \in \mathcal{S}_1}I_i\Big) \bigg|\\
		=&\argminA_{\mathcal{S}_1, \mathcal{S}_2} \bigg| \sum_{i \in \mathcal{S}_1} \log(I_i) - \sum_{i \in \mathcal{S}_2} \log(I_i)      \bigg|
	\end{aligned}
\end{equation}
Because all $I_i$'s are positive integers and the log function is monotonic, the solution to (\ref{eq:step}) is also the solution to 
\begin{equation}
	\argminA_{\mathcal{S}_1, \mathcal{S}_2} \bigg| \sum_{i \in \mathcal{S}_1} I_i - \sum_{i \in \mathcal{S}_2} I_i      \bigg|
\end{equation}
The above is known as the \textit{number partitioning problem}, and has been extensively studied in the literature \cite{Mertens1998, Ferreira1998, Mertens2006}. Although the partitioning problem is NP-complete, it can be solved efficiently in practice via dynamic programming. The optimal sets $\mathcal{S}_1$, $\mathcal{S}_2$ immediately give the maximal unfolding $\mathbf{X}_{<k>}^{\rho*}$.

\begin{figure}[t]
	\vspace{-0.1cm}
	\centering
	\includegraphics[width=0.9\linewidth,  trim={0cm 0cm 0cm 0cm}, clip]{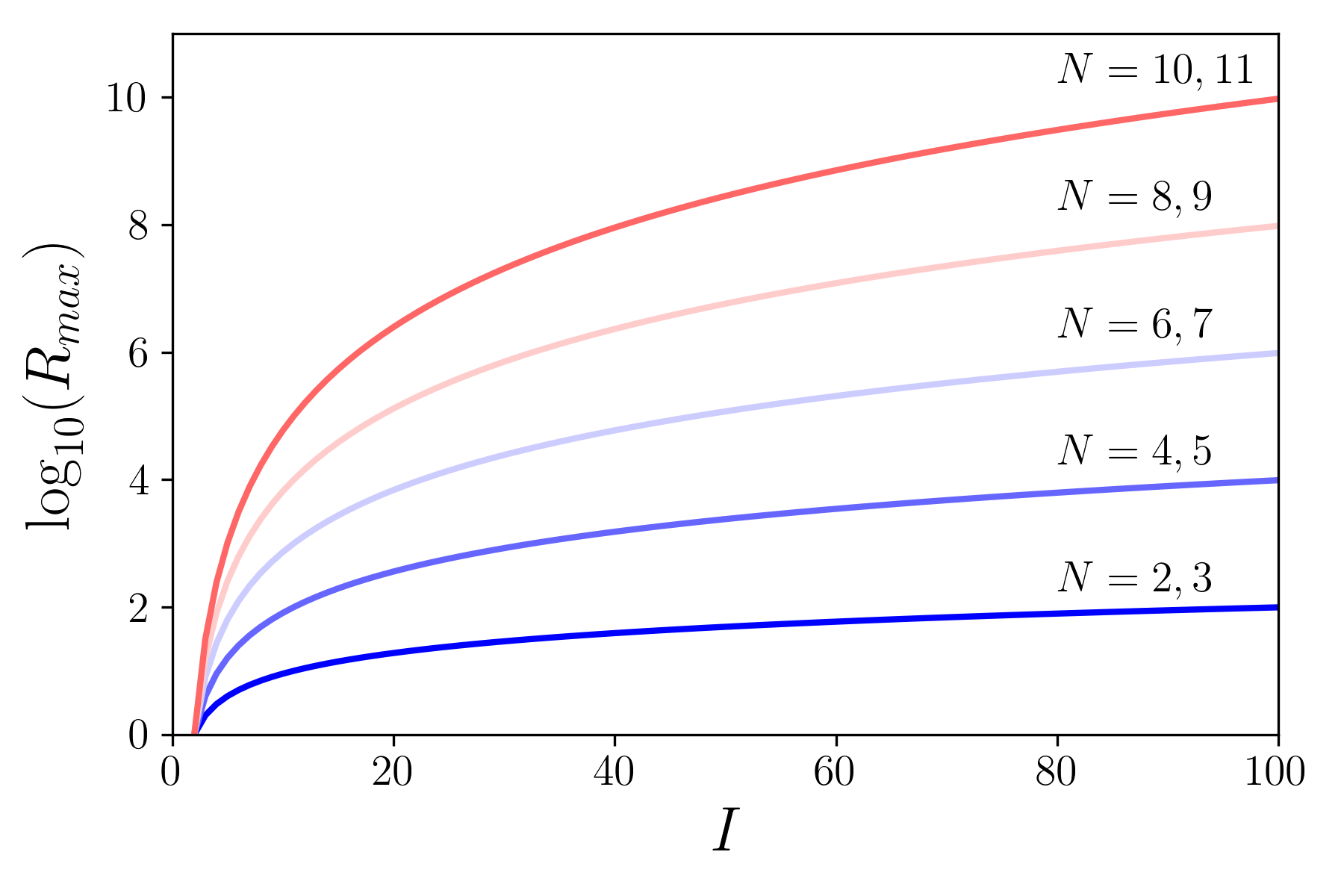}
	\vspace{-0.1cm}
	\caption{Rank detectability for $N$-th order tensors, $\ten{X} \in \mathbb{R}^{I_1 \times I_2 \times \cdots \times I_N}$, with $I_n = I$ for $n=1, \dots, N$, $N=1, \dots, 11$, and increasing values of $I$.} 
	\label{fig:detectability}
	\vspace{-0.4cm}
\end{figure}

We next show that the practical implications of Theorem \ref{th:rank} are two-fold. Firstly, given an $N$-th order tensor, $\ten{X} \in \mathbb{R}^{I_1 \times I_2 \times \cdots \times I_N}$, of rank $R$, if $R>\rank(\mathbf{X}_{<k>}^{\rho*})$, i.e. if the maximal unfolding $\mathbf{X}_{<k>}^{\rho*}$ is full rank, a search for $R$ may commence from $\rank(\mathbf{X}_{<k>}^{\rho*})$. Secondly, and perhaps more importantly, if $\mathbf{X}_{<k>}^{\rho*}$ is rank-deficient, then, generically, $R = \rank(\mathbf{X}_{<k>}^{\rho*})$, i.e. the rank is \textit{detected}, and no iterative procedure is required. The maximum detectable rank is therefore a function of the tensor dimensions, and is given by
\begin{equation}\label{eq:maxrank}
	R_{max} = \min\bigg\{ \prod_{i \in \mathcal{S}_1} I_i-1, \prod_{i \in \mathcal{S}_2} I_i-1 \bigg\}
\end{equation}
Fig. \ref{fig:detectability} illustrates the range of $R$ in (\ref{eq:maxrank}) for $N$-th order tensors, whereby equal dimensionality across modes has been assumed for simplicity, such that $R_{max} = I^{\floor{\frac{N}{2}}}-1$  and $I_i=I$ for all $i=1, \dots, N$. For such configuration, the maximum detectable rank is exponential in $\floor{\frac{N}{2}}$. Consider for example an $N$-th order tensor $\ten{X} \in \mathbb{R}^{20 \times 20 \times \cdots \times 20}$. If $N=2,3$, $R_{max} = 19$, while if $N=4,5$, $R_{max}=399$ and $R_{max}=7999$ if $N=6,7$. This means that, generically, the tensor rank, $R$, can immediately be found if $\rank(\mathbf{X}_{<k>}^{\rho*})\leq R_{max}$, i.e. if the maximal unfolding is rank-deficient, which removes the need for iterative procedures. The simplicity of such test makes it very useful in practice. 

%
%
%

\section{Conclusions}

A tight lower bound of the detectable tensor rank, $R$, has been derived which generalises to tensors of any order. This has been achieved with the purpose of easing computational costs on the Canonical Polyadic Decomposition (CPD), by considering the flattened version of the CPD and its corresponding factors. The desired bound of $R$ has been shown to be equal to the maximum possible matrix rank attainable from any unfolding of the tensor. Determining this so-called maximal unfolding reduces to solving the number partitioning problem on the indices of the original tensor, which can be done efficiently through dynamic programming. Finally, for a generic tensor, the bound has been demonstrated to be attainable when the maximal unfolding is rank-deficient, thereby providing a simple test for rank detectability. Numerical examples illustrate the practical benefits of this result.


\ifCLASSOPTIONcaptionsoff
\newpage
\fi



%
\balance
\bibliographystyle{IEEEtran}
\bibliography{references_pepe}

%




\end{document}